\newtheorem{theorem}{Theorem}
\newtheorem{claim}[theorem]{Claim}
\newtheorem{lemma}[theorem]{Lemma}
\newtheorem{corollary}[theorem]{Corollary}
\newtheorem{definition}[theorem]{Definition}
\newcommand{\eps}{\epsilon}
\newcommand{\init}{\mathsf{init}}
\newcommand{\fail}{\mathsf{fail}}
\DeclareMathOperator{\op}{\mathsf{op}}
\DeclareMathOperator{\ins}{\mathsf{insert}}
\DeclareMathOperator{\del}{\mathsf{delete}}
\DeclareMathOperator{\query}{\mathsf{query}}
\DeclareMathOperator{\fspace}{\mathsf{space}}
\DeclareMathOperator{\FP}{\mathsf{FP}}
\DeclareMathOperator{\FN}{\mathsf{FN}}
\DeclareMathOperator{\E}{\mathbb{E}}
\newcommand{\pfail}{p_{\mathsf{fail}}}
\newcommand{\pfp}{\eps^{+}}
\newcommand{\pfn}{\eps^{-}}
\newcommand{\Fs}{F_{\mathsf{static}}}
\newcommand{\Fd}{F_{\mathsf{dynamic}}}
\providecommand{\keywords}[1]
{
  \small	
  \textbf{\textit{Keywords---}} #1
}
\title{A Space Lower Bound for Approximate Membership with
Duplicate Insertions or Deletions of Nonelements}
\author[1]{Aryan Agarwala}
\author[2]{Guy Even\thanks{guy@eng.tau.ac.il}}
\affil[1]{Max Planck Institute for Informatics, Saarland Informatics Campus}
\affil[2]{School of Electrical Engineering, Tel Aviv University. Work done while visiting the Max Planck Institute for Informatics, Saarland Informatics Campus}
\date{\today}
\begin{document}
\maketitle
\begin{abstract}
Designs of data structures for approximate membership queries with false-positive errors that support both insertions and deletions stipulate the following two conditions: 
\begin{enumerate*}[label=(\arabic*)]
\item Duplicate insertions are prohibited, i.e., it is prohibited to insert an element $x$ if $x$ is currently a member of the dataset. 
\item Deletions of nonelements are prohibited, i.e., it is prohibited to delete $x$ if $x$ is not currently a member of the dataset.
\end{enumerate*}
Under these conditions, the space required for the approximate representation of a datasets of cardinality $n$ with a false-positive probability of $\pfp$ is at most $(1+o(1))n\cdot\log_2 (1/\pfp) + O(n)$ bits [Bender et al., 2018; Bercea and Even, 2019]. 

We prove that if these conditions are lifted, then the space required for the approximate representation of datasets of cardinality $n$ from a universe of cardinality $u$ is at least $\frac 12 \cdot (1-\pfp -\frac 1n)\cdot \log \binom{u}{n} -O(n)$ bits.
\end{abstract}
\keywords{Data Structures, Approximate Membership, Bloom Filter, Dynamic Filters}
\section{Introduction}
Dynamic data structures for approximate membership store an approximate representation of a dataset subject to insert and delete operations. The output of such data structures for membership queries may be wrong, specifically false-positive errors are allowed.

We refer to such data structures in short as \emph{dynamic filters}. \footnote{Static filters store an approximate representation of a fixed dataset. Incremental filters support insertions but not deletions. We refer to static/incremental/dynamic filters simply as filters.}
Filters are randomized data structures, and the probability of a false-positive error is bounded by a (tunable) parameter denoted by $\pfp$. The main advantage of the approximate representation used by a filter over exact error-free representation is the reduction in space (i.e., number of bits needed to store the representation). 
Indeed, consider a universe of cardinality $u$ and datasets of cardinality $n$. 
Because there are $\binom{u}{n}$ different subsets of $u$ cardinality $n$, it follows that exact representation of such datasets requires at least $\log \binom{u}{n}\geq n\log (u/n)$ bits.
\footnote{All logarithms in this paper are base $2$.}
On the other hand, dynamic filters with a false-positive error probability of at most $\pfp$ require only $(1+o(1))n\log(1/\pfp)+O(n)$ bits~\cite{bender2018bloom,bercea2019fully,Bercea2024OneMemeoryAccess}. 
If $\log(1/\pfp)$ is much smaller than $\log(u/n)$, then the space of a filter is smaller than the space required for exact representation. Such a reduction in space often enables the storage of the approximate representation in fast yet small memory~\cite{Bloom1970filter} which is advantageous in many applications (see~\cite{broder2004network} for a survey).

Dynamic filter designs, such as Cuckoo filters~\cite{fan2014cuckoo,eppstein2016cuckoo}, 
Quotient filters~\cite{pagh2005optimal,bender2011thrash}, Broom filters~\cite{bender2018bloom}, and~\cite{bercea2019fully} are based on the framework suggested by Carter et al.~\cite{carter1978approximate}.
Carter et al. presented a general framework for incremental filters that support insertions and approximate membership queries. To obtain a false-positive error probability of at most $\pfp$, the framework hashes the $n$ keys to strings of $\log (n/\pfp)$ bits. The images of the keys are called \emph{fingerprints} and are stored in a dictionary (i.e., hashtable) that supports exact representation of datasets~\cite{arbitman2010backyard}. That is, an $\ins(x)$ operation to the filter translates to an $\ins(\FP(x))$ to the dictionary, where $\FP(x)$ is the fingerprint of $x$.  A query for $x$ simply searches for $\FP(x)$ in the dictionary. To support deletions in this framework, the dictionary needs to store multisets of fingerprints~\cite{pagh2005optimal} (i.e., the multiplicity of a fingerprint equals the number of elements in the dataset that map to that fingerprint). When a $\del(x)$ operation is issued to the filter, the multiplicity of $\FP(x)$ in the dictionary is decremented (e.g., deletion of a copy of $\FP(x)$).

\medskip\noindent
For a dataset $S$, we refer to $x\in S$ as an \emph{element} and to $x\not\in S$ as a \emph{nonelement}. Previous designs of dynamic filters~\cite{fan2014cuckoo,pagh2005optimal,bender2011thrash,bender2018bloom,bercea2019fully,even2024micro} stipulate the following conditions:
\begin{enumerate}
    \item \textbf{Duplicate insertions are prohibited}. One may not perform an $\ins(x)$ operation if $x$ is an element (i.e., already an element in the dataset). 
    \item \textbf{Deletions of nonelements are prohibited.} One may not  perform a $\del(x)$ operation if $x$ is a nonelement (i.e. not an element in the dataset).
\end{enumerate}
These two conditions arise from the limitations of the framework for designing a dynamic filter by storing a multiset of fingerprints.  The reason for prohibiting duplicate insertions is the inability of the framework to distinguish between a duplicate insertion of an element $x$ and an insertion of a nonelement $x$ whose fingerprint collides with a fingerprint of an element $y$. Observe that these two events should have a different influence on the approximate representation of the dataset: 
\begin{enumerate*}[label=(\arabic*)]
    \item An $\ins(x)$ operation, where $x$ is an nonelement, $y$ is an element, and $\FP(y)=\FP(x)$ should increment the multiplicity of $\FP(x)$. Otherwise, a subsequent $\del(y)$ operation (that decrements the multiplicity of $\FP(y)$) would zero the multiplicity of $\FP(y)$ and thus introduce a false-negative error for $\query(x)$. 
    \item A duplicate insertion $\ins(x)$ does not modify the dataset and should be ignored. In other words, if the multiplicity of $\FP(x)$ is incremented, then a $\del(x)$ followed by $\query(x)$ would definitely return a false-positive error.
\end{enumerate*}

The reason for prohibiting a deletion of a nonelement $x$ is that there might exist an element $y$ such that $\FP(x)=\FP(y)$. Because the framework cannot distinguish between a deletion of $x$ and a deletion of $y$, a $\del(x)$ operation would delete $\FP(y)$, leading to a false-negative error for $\query(y)$.

In this paper we address the following question: \textbf{Is it possible to design a small space dynamic filter that permits duplicate-insertions or deletions-of-nonelements?}
We answer this question negatively (see \Cref{thm:dupilicate ins deletion of nonelement} for a precise formulation), and prove that if a dynamic filter permits duplicate-insertions or deletions-of-nonelements, then the space required to store the approximate representation of the dataset is $\Omega\brk*{\binom{u}{n}}$ bits (if $u\geq 2n$).
This means that the space of such a filter is at least a constant fraction of the space required for exact representation (i.e., the fraction is at least $(1-\pfp-o(1))/2$ if deletions-of-nonelements are allowed). The lower bound does not make assumptions about the running time of the filter or the number of memory accesses. Moreover, the lower bound does not count the random bits of the filter or temporary memory used for processing  operations.
The lower bound holds with respect to an oblivious adversary.

\subsection{Related Work}

\subsubsection{Upper Bounds}
The first filter that supports insertions and approximate membership queries (but no deletions) was presented by Bloom~\cite{Bloom1970filter}. Carter et al~\cite{carter1978approximate} presented a general framework for filters that support insertions and approximate membership queries. The framework suggests to hash the keys to strings of length $\log (n/\pfp)$ using a family of universal hash functions. The images of the keys are called \emph{fingerprints} and are stored in a dictionary (i.e. hashtable) that supports exact representation of datasets. A query for $x$ simply searches for its fingerprint in the dictionary. Arbitman et al.~\cite{arbitman2010backyard} presented a succinct constant-time dictionary. Storing fingerprints in  such a dictionary yields a constant-time filter that supports insertions and approximate membership queries with space $(1+o(1))n\log (1/\pfp) + O(n)$ bits.
Collisions of fingerprints pose a challenge in the design of dynamic filters that support deletions in addition to insertions and approximate membership queries. Dynamic filters with space $(1+o(1))n\log (1/\pfp) +O(n)$ bits are described in~\cite{bender2018bloom,bercea2019fully,Bercea2024OneMemeoryAccess}. 
Fan et al.~\cite{fan2014cuckoo} presented the Cuckoo Filter that stores fingerprints in a Cuckoo hashtable. All the designs for dynamic filters prohibit duplicate-insertions and deletions-of-nonelements.

\subsubsection{Lower Bounds}
The following lower bounds related to filters have been proven to date.
\begin{enumerate}
    \item A static filter for datasets of cardinality $n$ with false-positive error probability $\pfp$ requires space that is at least $n\log (1/\pfp)-O(1)$ bits (if $u>n^2/\pfp$). ~\cite{carter1978approximate,broder2004network,dietzfelbinger2008succinct}.
\item An extendable dynamic filter whose space is a function of $n=|S|$, the current cardinality of the dataset, requires space that is at least $(1-o(1))\cdot n\cdot \brk*{\log \frac{1}{\pfp}+ (1-O(\pfp))\log\log n}$ bits~\cite{PaghSegevWieder2013size}.
\item For every constant $\pfp>0$, an incremental filter requires space that is at least 
$C(\pfp)\cdot n\log (1/\pfp)$ bits, where $C(\pfp)>1$~\cite{LovettPorat2013lowerbound}

\item A dynamic filter that is resilient to an adaptive adversary that tries to repeat a false-positive error requires space that is $\Omega(\min\{n\log\log u, n \log n\})$ bits~\cite{bender2018bloom}.

\item A dynamic filter requires space that is $n\log(1/\pfp) + \Omega(n)$ (even if $\pfp=o(1)$)~\cite{kuszmaul2024space}.
\end{enumerate}

\subsection{Outline of Proof}
In this section, we outline the proof of the lower bound for the space of  dynamic filters that permit duplicate-insertions or deletions-of-nonelements (\Cref{thm:dupilicate ins deletion of nonelement}) in simplified terms; full details and a formal presentation appear in the paper.

The starting point of the proof is a lower bound on a static filter with false-negative errors but without false-positive errors. Carter et al.~\cite{carter1978approximate} write:\footnote{The nomenclature in Carter et. al refers to false-negative errors as ``false alarms'' and to a dataset as a ``vocabulary''.} \emph{``It is easy to see that little memory can be saved by permitting false alarms; a membership tester with a small false alarm probability is in fact a checker for a slightly smaller vocabulary.''} We failed to find a proof of this statement in the literature, so we formalize this statement and prove  in \Cref{thm:static filter} and \Cref{coro:pfp=0} that the space of a static filter with false-negative errors (but without false-positive errors) is $\Omega(\log \binom{u}{n})$.

In \Cref{def:witness-based}, we define a \emph{witness-based filter} $F^*$. Given an approximate representation $M$, the witness-based filter $F^*$ answers ``yes'' to $\query(x)$ if only if there exists a dataset $S$ such that $x\in S$ and $M$ is an approximate representation of $S$. 
One can transform every dynamic filter $F$ to a witness-based dynamic filter $F^*$ by scanning all the datasets while preserving the parameters of $F$ (i.e., space as well as the bound on false-positive error probability). Because our lower bound ignores running time and temporary memory, we may focus, without loss of generality, on witness-based dynamic filters.

\Cref{lemma:false positives} proves that the false-positive errors in a witness-based dynamic filter that permits deletions-of-nonelements are ``sticky''. That is, if $S$ is a dataset, $x\not \in S$, and $\query(x)$ returns ``yes'' after the elements of $S$ are inserted, then $\query(x)$ still returns ``yes'' after the elements of $S$ are deleted. 

In \Cref{def:reduction}, we construct a static filter $\sigma(F)$ with false-negative errors (but no false-positive errors) by using a witness-based dynamic filter $F$ that permits deletions-of-nonelements as a black box. 
Let $S$ denote the dataset that the static filter $\sigma(F)$ should approximately represent. The reduction uses two approximate dataset representations by  $F$: 
\begin{enumerate*}[label=(\arabic*)]
\item The approximate representation $M(+S)$ of $S$ by $F$.
\item The approximate representation $M(+S,-S)$ by $F$ obtained after deleting the elements of $S$ from $M(+S)$.
\end{enumerate*}
The static filter $\sigma(F)$ processes a query for $x$ by applying two queries to $F$: one with the representation $M(+S)$ that returns $b_1$, and the second with representation $M(+S,-S)$ that returns $b_2$. The answer is $b\triangleq(b_1 \text{ and } \mathsf{not}(b_2))$. In \Cref{lemma:reduction}, a simple case analysis proves that $\sigma(F)$ lacks false-positive errors and the false-negative probability is bounded by the false-positive probability of $F$.
The proof of the lower bound on the space of $F$ (i.e., \Cref{thm:dupilicate ins deletion of nonelement}) is completed by applying the lower bound on the space of a static filter with false-negative errors to $\sigma(F)$ (i.e., \Cref{coro:pfp=0}).

We present simple reductions between duplicate-insertions and deletions-of-non-elements (in both directions) in \Cref{claim:reductions}. Hence, it suffices to consider only deletions-of-non-elements. 
\section{Preliminaries}
\subsection{Notation}
For a nonnegative integer $k$,  let $[k]$ denote the set $\{0,1,\ldots, k-1\}$. We refer to the set from which elements are as the \emph{universe}. Without loss of generality, we assume that the universe is $[u]$, where $u\in\mathbb{N}$ is the cardinality of the universe. 
A \emph{dataset} is a subset of $[u]$.
Let $n$ denote an upper bound on the cardinality of datasets. 
All logarithms in this paper are base two.

\subsection{Randomized Algorithms}
The data-structures in this paper are randomized. 
For simplicity, we view randomization as drawing a  string $r$ of random bits when the data-structure in initialized, where the length of $r$ is bounded by a function of the size of the universe and the size of the dataset. 
One can extend the definition of randomization and the proof of to an unbounded number of bit flips (e.g., new random bits per query or Las-Vegas algorithms that repeat until they succeed).

\subsection{Oblivious Adversary}
Throughout this paper, we assume that the distribution from which the input to the filter is drawn (i.e., choice of dataset and sequence of operations) is independent of the random string $r$ chosen by the filter. 
This independence is referred to as an \emph{oblivious adversary}. 
For example, an adversary is oblivious if the sequence of operations is fully determined before the random string $r$ is chosen by the filter. We emphasize that the lower bound holds even with respect to such a restricted adversary.

\subsection{N-Static Filters: Static Filters with False-Negative Errors}
In this section, we define static filters with false-negative errors (but without false-positive errors) called N-static filters. We point out that the sole purpose of N-static filters is for the proof of the lower bound. 

We begin with a syntactic definition (i.e., API) of static filter in \Cref{def:N-static syntax}.
We then define the semantics (i.e., functionality) of N-static filters in \Cref{def:N-static}.

\begin{definition}[Syntax of Static Filter]\label{def:N-static syntax}
A static filter is a pair of randomized algorithms $\{A_{\init},A_{\query}\}$ with shared randomness. Let $r$ denote the shared string of random bits.
The input to a static filter consists of two parts:
\begin{enumerate*}[label=(\arabic*)]
    \item a dataset $S\subseteq [u]$ , and 
    \item a sequence of queries $\{\query(x_t)\}_{t}$, where $x_t\in [u]$ for every $t$.  
\end{enumerate*}

\medskip\noindent
Processing of operations. 
\begin{enumerate}
    \item Initialization.
The filter is initialized by drawing a random string $r$ and invoking $A_{\init}(r,S)$ that returns the \emph{initial state} $M(r,S)$.  The state $M(r,S)$ is a binary string that serves as an approximate representation of the dataset $S$. 
\item 
A $\query(x_t)$ is processed by invoking $A_{\query}(r,M(r,S),x)$ that returns a bit $b_t$.\footnote{Note that the processing of a query does not modify the state of the filter. Such a filter is referred to in~\cite{Naor-Yogev2019adversarial} as a filter that has a \emph{steady representation.} }
\end{enumerate}
\end{definition}
 
The following definition defines the family of N-static filters $\Fs(u,n,\pfail,\pfn)$ that have two types of errors: initialization errors and false-negative errors. Initialization failure captures failure in some Monte-Carlo constructions for static filters~\cite{dietzfelbinger2008succinct,porat2009optimal,dillinger2021ribbon}. 
The probability space is over the random choice of the string $r$.

\begin{definition}[Semantics of N-Static Filter]\label{def:N-static}
A static filter $F$ is an N-static filter in $\Fs(u,n,\pfail,\pfn)$ 
if for every dataset $S\subseteq [u]$ of cardinality at most $n$ and every sequence $\{\query(x_t)\}_{t}$ of queries, the following properties hold:
   \begin{enumerate}
       \item Initialization Success. $\Pr_r[M(r,S)=\fail] \leq \pfail$. 
       \item Soundness (bounded false-negative errors). If $x_t\in S_t$, then $\Pr_r [b_t=0\mid M(r,S)\neq\fail]\leq \pfn$.
       \item Completeness (no false-positive errors). If $x_t\not\in S_t$ or $M(r,S)=\fail$, 
       then $b_t=0$.
       \end{enumerate}
\end{definition}
To avoid false-positive errors, an N-static filter responds to every query with ``no'' (i.e., $b_t=0$) if the state is $\fail$.
\begin{definition}[Space of Static Filter]
The space used by a static filter $F$ is the maximum length (in bits) of the initial state $M(r,S)$ over all subsets $S\subseteq [u]$ of cardinality at most $n$ and all strings $r$. Formally,
\begin{align*}
     \fspace(F)&\triangleq\max_{S\subseteq [n]: |S|=n}\max_r |M(r,S)|\;.
\end{align*}\end{definition}
We emphasize that the string $r$ of random bits is not part of the state and is not counted in the space. Indeed, $r$ is input to both $A_{\init}$ and $A_{\query}$  and serves as shared randomness (i.e., public coins). 
Moreover, the space does not include internal memory used by the filter's algorithms (i.e.,
temporary memory that is used for the processing of an operation).
\footnote{Excluding internal memory and the random string from the space does not weaken the lower bound.}

\subsection{Dynamic Filters}
\begin{definition}[Sequences of Operations and Datasets]
A sequence $\{\op_t\}_{t=0}^T$  of initialize/insert/delete/query operations for which $\op_0$ is an initialization operation inductively defines the sequence of sets $\{S_t\}_{t=0}^T$   as follows:
    \begin{enumerate}
            \item If $\op_t=\init$, then $S_t\gets \emptyset$.
        \item If $\op_t=\ins(x_t)$, then $S_t\gets S_{t-1}\cup \{x_t\}$.
            \item If $\op_t=\del(x_t)$, then $S_t\gets S_{t-1}\setminus\{x_t\}$.
            \item If $\op_t=\query(x_t)$, then $S_t\gets S_{t-1}$.   
    \end{enumerate}
\end{definition}
\begin{definition}[Duplicate-Insertions and Deletion-of-Nonelements]
    We refer to $\op_t$ as a \emph{duplicate-insertion} if $x_t\in S_{t-1}$ and $\op_t=\ins(x_t)$.
    We refer to $\op_t$ as a \emph{deletion-of-a-nonelement} if $x_t\not\in S_{t-1}$ and $\op_t=\del(x_t)$.  
\end{definition}
We emphasize that if $\op_t$ is a duplicate-insertion or a deletion-of-a-nonelement, then $S_t=S_{t-1}$.
The following definition of $(u,n)$-sequences permits duplicate-insertions or deletions-of-nonelements. 
\begin{definition}[$(u,n)$-Sequences]
 A sequence $\{\op_t\}_{t=0}^T$  of insert/delete/initialize/query operations is a $(u,n)$-sequence if 
 \begin{enumerate}
     \item $\op_0=\init$, 
     \item the argument of every insert/delete/query operation is in $[u]$ and
     \item $\max_t |S_t|\leq n$. 
 \end{enumerate}
\end{definition}
The following definition defines a dynamic filter that supports initialization, insertions, deletions, and (approximate) membership queries.
\begin{definition}[Syntax of a Dynamic Filter]
A dynamic filter is a $4$-tuple of randomized algorithms $\{A_{f}\}$, where $f\in\{\init,\ins,\del,\query\}$. 
Let $r$ denote the shared string of random bits used by the algorithms. The input consists of a $(u,n)$-sequence of operations $\{\op_t\}_{t=0}^T$.  Let $M_t$ be a binary string that denotes the \emph{state at time $t$}.

\medskip\noindent
Processing of operations. 
    \begin{enumerate}
        \item An $\op_t=\init$ is processed by drawing the string $r$ of random bits and invoking $M_t\gets A_{\init}(r)$
        \item 
        For $f\in\{\ins,\del\}$,
        an $\op_t=f(x_t)$ is processed by invoking $M_t\gets A_f(r,M_{t-1},x_t)$, where $M_t$ is the next state. 
        \item 
        An $\op_t=\query(x_t)$  is processed by invoking $(M_t,b_t) \gets A_{\query}(r,M_{t-1},x_t)$, where the bit $b_t$ is the answer to the query.\footnote{We allow a query to modify the state, i.e., a dynamic filter may employ an unsteady representation of the dataset. }
        \end{enumerate}
\end{definition}

We define the family of dynamic filters $\Fd(u,n,\pfail,\pfp)$ that have two types of errors: initialization errors and false-positive errors. 
Note that the probability space is over the choice of the random bit string $r$. 
\begin{definition}[Semantics of a Dynamic Filter]
A dynamic filter $F$ is in $\Fd(u,n,\pfail,\pfp)$ if, for every $(u,n)$-sequence $\{\op_t\}_{t}$, the following holds:
   \begin{enumerate}
       \item 
       Success Probability. The transition to the state $\fail$ satisfies: 
       \begin{enumerate}
           \item Initialization never fails. If $\op_t=\init$, then $M_t\neq \fail$. 
           \item The state $\fail$ is a sink. If $M_{t-1}=\fail$ and $op_t\neq \init$, then $M_t=\fail$.
           \item Insertion and deletion  fail with probability at most $\pfail$.
           If $\op_t\in\brk[c]{\ins(x_t),\del(x_t)}$, then $\Pr_r[M_t=\fail \mid M_{t-1}\neq \fail] \leq \pfail$. 
           \item Queries do not fail.
        If $\op_t=\query(x_t)$, then 
       $\Pr_r[M_t=\fail \mid M_{t-1}\neq \fail] =0$.
       \end{enumerate}
       \item Completeness (no false-negatives). If $\op_t=\query(x_t)$ and  $(x_t\in S_t$ or $M_{t-1}=\fail)$, then $\Pr_r [b_t=1]=1$.
       \item Soundness (bounded false-positives). If $\op_t=\query(x_t)$,  $x_t\not\in S_t$, then $\Pr_r [b_t=1\mid M_{t-1}\neq \fail]\leq \pfp$.
   \end{enumerate}
\end{definition}
Observe that: 
\begin{enumerate*}[label=(\arabic*)]
\item To avoid false-negative errors, the response to queries is always ``yes'' (i.e., $b_t=1$) if the state is $\fail$.
    \item
The failure probability $\pfail$ in a dynamic filter is per insert/delete operation. Conversely, in an N-static filter, $\pfail$ bounds the probability of failure during initialization.
\end{enumerate*}
\begin{definition}[Dynamic Filter Space]
The \emph{space} used by a dynamic filter $F\in\Fd(u,n,\pfail,\pfp,\pfn)$ is the maximum length (in bits) of the state of $F$ over all $(u,n)$-sequences of operations and over all strings $r$. 
Formally, 
\begin{align*}
    \fspace(F)\triangleq \max_{\{op_t\}_t} \max_r \max_t |M_t|\;.
\end{align*}
\end{definition}
\medskip\noindent

The space refers only to the state that is passed as an argument to the next operation. In particular, the space does not include the string $r$ of random bits nor any internal memory used by the filter's algorithms. 

\section{Lower Bound on Space of N-Static Filter}\label{sec:N-static filter lower bound}
In this section, we prove a lower bound on the space of an N-static filter that  asymptotically equals the space of a dictionary with exact representation (provided that $u\geq 2n$, $\pfn=o(1)$, and $\pfail\leq 1/(1+ n\pfn) -2^{-n}$). 

The intuition of the proof for the space of an N-static filter is as follows. The set of all states has cardinality $2^{\fspace(F)}$. How many datasets of cardinality $n$ can an average state approximately represent? Fix a state $M$, and let $Y$ denote the subset of elements for which a query with state $M$ returns a ``yes''. The state $M$ is an approximate representation of dataset $S$ if there are no false-positives (i.e., $Y\subseteq S$)  and the number of false-negatives (i.e., $|S\setminus Y|$) is at most $  n\pfn$. In such a case, given $M$, an exact representation of $S$ can be obtained by describing the set $S\setminus Y$. 
Hence, the number of datasets $S$ is not less than the number of initial states times the number possible ``false-negative'' sets $S\setminus Y$. Informally, we want to prove that \footnote{We emphasize that this equation is incorrect and appears only to provide intuition. Indeed, the ``average'' number of false-negatives is bounded by $ n\pfn$, but, for some datatsets and some random strings, the number of false-positives might be larger. 
Moreover, the equation ignores failure during initialization.}
\begin{align*}
    2^{\fspace(F)}\cdot \binom{u}{ n\pfn}   \geq \binom{u}{n}\;.
\end{align*}

We deal with randomization and failure by defining a pair $(r,S)$ (where $r$ is the random string and $S$ is a dataset) as \emph{good} if the initial state $M(r,S)$ is not $\fail$ and the set of false-negatives $S\setminus Y$ is not too big (i.e., $|S\setminus Y|\leq \alpha\pfn \cdot n$). We prove that there exists a ``magic'' random string $r^*$ such that, for many datasets $S$, the pair $(r^*,S)$ is good.

\begin{theorem}\label{thm:static filter}
For every N-Static filter
$F\in\Fs(u,n,\pfail,\pfn)$ and for every $\alpha>1$, it holds that
\begin{align}\label{eq:static filter lb}
    2^{\fspace(F)}\cdot 
    \sum_{0 \leq k\leq \alpha n\pfn} 
    \binom{u}{k} 
    &\geq
    \left(1-\frac{1}{\alpha}-\pfail\right)\cdot \binom{u}{n} \;
\end{align}
\end{theorem}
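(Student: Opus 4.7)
The plan is to run a counting/encoding argument that matches the intuition sketched in the paper. For each random string $r$ and dataset $S\subseteq[u]$ with $|S|\leq n$, define the \emph{accepted set}
\[
Y(r,S)\triangleq\{x\in[u] : A_{\query}(r,M(r,S),x)=1\}.
\]
Since $Y(r,S)$ is a deterministic function of $r$ and the state $M(r,S)$, it is determined by $r$ together with $M(r,S)$. Completeness guarantees $Y(r,S)\subseteq S$ whenever $M(r,S)\neq \fail$, and soundness together with linearity of expectation gives $\E_r[\,|S\setminus Y(r,S)|\mid M(r,S)\neq\fail]\leq n\pfn$. Call a pair $(r,S)$ \emph{good} if $M(r,S)\neq\fail$ and $|S\setminus Y(r,S)|\leq \alpha n\pfn$. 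Markov's inequality applied conditionally on non-failure then yields
\[
\Pr_r[(r,S)\text{ good}]\;\geq\;(1-\pfail)\bigl(1-\tfrac{1}{\alpha}\bigr)\;\geq\;1-\pfail-\tfrac{1}{\alpha}
\]
for every fixed $S$.

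Next I would do the standard averaging step over $r$ to extract a ``magic'' random string. Summing the above inequality over all $\binom{u}{n}$ datasets $S$ of cardinality exactly $n$ and interchanging the order of summation yields
\[
\sum_r \Pr[r]\cdot\bigl|\{S:(r,S)\text{ good}\}\bigr|\;\geq\;\bigl(1-\pfail-\tfrac{1}{\alpha}\bigr)\binom{u}{n}.
\]
Hence there exists a string $r^*$ such that the number of datasets $S$ of size $n$ for which $(r^*,S)$ is good is at least $\bigl(1-\pfail-\tfrac{1}{\alpha}\bigr)\binom{u}{n}$.

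The key encoding step is to show that, for this fixed $r^*$, the map $S\mapsto\bigl(M(r^*,S),\; S\setminus Y(r^*,S)\bigr)$, restricted to good $S$, is injective. This is immediate: given $r^*$ and the state $M(r^*,S)$, the set $Y(r^*,S)$ can be recomputed by querying $A_{\query}(r^*,M(r^*,S),\cdot)$ on every element of $[u]$, and then
\[
S=Y(r^*,S)\cup(S\setminus Y(r^*,S)),
\]
since $Y(r^*,S)\subseteq S$ by the good-pair conditions. Counting the image: there are at most $2^{\fspace(F)}$ possible states, and the ``false-negative set'' $S\setminus Y(r^*,S)$ has size at most $\alpha n\pfn$, giving at most $\sum_{k\leq \alpha n\pfn}\binom{u}{k}$ choices. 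Combining the counting bound with the lower bound on the domain of the injection produces exactly \eqref{eq:static filter lb}.

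The main obstacle, as I see it, is being careful in the first paragraph: the soundness guarantee only bounds the false-negative probability per query \emph{conditional} on the initialization not failing, so one must handle the $\fail$ event separately (absorbing a $\pfail$ loss) before applying Markov's inequality. Everything else is routine averaging and an injective-encoding count, and the use of sets of size exactly $n$ (rather than at most $n$) on the right-hand side matches the $\binom{u}{n}$ in the theorem.
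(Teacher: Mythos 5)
Your proposal is correct and follows essentially the same route as the paper's proof: bound the conditional expected number of false negatives, apply Markov's inequality to define good pairs $(r,S)$, average over datasets to extract a single string $r^*$, and then injectively encode each good $S$ by the pair $\bigl(M(r^*,S),\,S\setminus Y(r^*,S)\bigr)$. Your explicit remark that $Y(r^*,S)$ is recoverable from $M(r^*,S)$ by querying all of $[u]$ (using completeness to rule out false positives) is exactly the point the paper relies on for injectivity.
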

\medskip\noindent
Throughout the following proof, $S$ denotes a subset of $n$ elements in $[u]$ and $r$ denotes a random bit string.
\begin{proof}
    Define the set false-negatives  $\FN(r,S)$ by
    \begin{align*}
        \FN(r,S)&\triangleq \{x\in S \mid A_{\query} (r,M(r,S),x) = 0\}\;.
    \end{align*}
By the soundness of an N-static filter,
\begin{align*}
    \forall x \in S, \ \Pr_r[x \in \FN(r, S)\mid M(r,S)\neq\fail] &\leq \pfn
\end{align*}
By linearity of expectation,
\begin{align*}
        \E_r [ |\FN(r,S)| ~\mid M(r,S)\neq\fail] &= \sum_{x \in S} \Pr_r[x \in \FN(r, S)\mid M(r,S)\neq \fail] \leq \pfn \cdot n \;.
    \end{align*}
Define 
\begin{align*}
        G_S &\triangleq \{r~:~ |\FN(r,S)|\leq \alpha\pfn \cdot n\}\;.
    \end{align*}
By Markov's inequality, 
\begin{align*}
\Pr_r[r\not\in G_S\mid M(r,S)\neq\fail] 
\quad\leq\quad \frac{1}{\alpha}\;. 
\end{align*}

\medskip\noindent
Define 
\begin{align*}
    C(r,S) &\triangleq \begin{cases}
        1 &\text{if $r\in G_S$ and  $M(r,S)\neq \fail$,}\\
        0 &\text{otherwise.}
    \end{cases} 
\end{align*}
It follows that, for every $S$, 
\begin{align*}
\Pr_r [C(r, S) = 1] &\geq 
1-\brk*{\Pr_r[r\not\in G_S\mid M(r,S)\neq\fail] +\Pr_r[M(r,S)=\fail]} \geq
1-\frac{1}{\alpha}-\pfail\;.
\end{align*}
By Yao's Minimax Principle, it follows that there exists an $r^*$ such that
\begin{align}\label{eq:good datasets}
| \{S : C(r^*,S) = 1\}| &\geq \brk*{1-\frac{1}{\alpha} - \pfail}\cdot \binom{u}{n} \;.
\end{align}

To complete the proof it suffices to prove that the LHS of \Cref{eq:static filter lb} is not less than the LHS of \Cref{eq:good datasets}. To this end, we define an injective function $\pi:\{ S \mid C(r^*,S)=1 \} \rightarrow \{0,1\}^*\times \mathbb{N}$ such that the cardinality of the image is not greater than the LHS of \Cref{eq:static filter lb}.

The function $\pi(S)$ is defined as follows. 
The first coordinate of $\pi(S)$ equals $M(r^*,S)$.
It is useful to view $M(r^*,S)$ as an exact encoding of $Y\triangleq S\setminus \FN(r^*,S)$. That is, $Y$ is the set of all elements $x\in S$ for which $A_{\query}(r^*,M(r^*,S),x)=1$. By Definition, there are at most $2^{\fspace(M)}$ possible values for $M(r^*,S)$.

Consider a lexicographic ordering of all the subsets of $[u]\setminus Y$ of cardinality at most $\alpha n\pfn$. Because $C(r^*,S)=1$, it follows that $\FN(r^*,S)=S\setminus Y$ is one of these subsets. 
The second coordinate of $\pi(S)$ is defined to be the index $i$ of $\FN(r^*,S)$ among these subsets. 
This index $i$ is at most $\sum_{0 \leq k\leq \alpha n\pfn} \binom{u}{k}$. This completes the definition of the function $\pi$ and the bound on its image.

The function $\pi$ is injective because one can reconstruct $S$ from $\pi(S)$. 
Indeed, $M(r^*,S)$ encodes $Y$ and the index $i$ encodes $S\setminus Y$, and the theorem follows.
\end{proof}

Consider a fraction $0\leq \beta \leq 1$ and a universe of cardinality $u$. 
Consider two experiments:
\begin{enumerate*}[label=(\arabic*)]
    \item pick an arbitrary dataset of $n$ elements from the universe, and
    \item pick a sequence of $1/\beta$ arbitrary datasets of $\beta\cdot n$ elements from the universe.
\end{enumerate*}
The following claim bounds the number of outcomes of the second experiment (i.e. $\binom{u}{\beta n}^{1/\beta}$) by the number of outcomes of the first experiment (i.e, $\binom{u}{n}$).
\begin{claim}\label{claim:binom}
    For every $0\leq \beta \leq 1$, it holds that 
    \begin{align}\label{eq:ratio binoms}
    \log \binom{u}{\beta n}
&\leq
\beta\log \binom{u}{n}+O(n)\;.
    \end{align}
\end{claim}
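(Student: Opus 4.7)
The plan is to use the standard elementary bounds on binomial coefficients,
namely $\binom{u}{k} \leq (eu/k)^k$ and $\binom{u}{n} \geq (u/n)^n$
(the latter valid whenever $u \geq n$, which we may assume throughout).
Taking logarithms of the upper bound applied to $k = \beta n$, I get
\[
\log \binom{u}{\beta n} \;\leq\; \beta n \log\!\brk*{\frac{eu}{\beta n}} \;=\; \beta n \log(u/n) + \beta n \log(e/\beta),
\]
and taking logarithms of the lower bound on $\binom{u}{n}$ and multiplying by $\beta$ gives
\[
\beta \log \binom{u}{n} \;\geq\; \beta n \log(u/n).
\]
Subtracting these two inequalities removes the main term $\beta n \log(u/n)$ and yields
\[
\log \binom{u}{\beta n} - \beta \log \binom{u}{n} \;\leq\; \beta n \log(e/\beta).
\]

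It then remains to absorb the residual term $\beta n \log(e/\beta)$ into $O(n)$. I would do this by observing that the scalar function $f(\beta) \triangleq \beta \log(e/\beta) = \beta/\ln 2 - \beta \log \beta$ is continuous on $[0,1]$ (with $f(0)\triangleq 0$) and its derivative $f'(\beta) = -\log \beta$ is nonnegative on $(0,1]$. Hence $f$ attains its maximum at $\beta=1$, giving $f(\beta)\leq 1/\ln 2$ for all $\beta\in[0,1]$. Therefore $\beta n \log(e/\beta) \leq n/\ln 2 = O(n)$, and the claim follows.

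The argument is routine and I do not anticipate a genuine obstacle; the only minor technical point is the handling of the degenerate cases ($\beta = 0$, or $\beta n$ not an integer, which I would resolve by the convention that $\binom{u}{\beta n}$ stands for $\binom{u}{\lfloor \beta n\rfloor}$ and noting that the bounds on binomial coefficients I use are standard and hold whenever $u\geq n\geq \beta n \geq 0$). Everything else is a straightforward chain of inequalities.
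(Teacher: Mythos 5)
Your proposal is correct and follows essentially the same route as the paper: both apply the standard bounds $\brk*{u/n}^n \leq \binom{u}{n}$ and $\binom{u}{\beta n} \leq \brk*{eu/(\beta n)}^{\beta n}$, cancel the main term $\beta n \log(u/n)$, and absorb the leftover entropy-type term into $O(n)$. Your residual $\beta n\log(e/\beta)$ is in fact slightly tighter than the paper's $n\log(e/\beta^{\beta})$, and your explicit maximization of $\beta\log(e/\beta)$ over $[0,1]$ is a fine way to justify the $O(n)$ absorption.
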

\begin{proof}
    \begin{align}
        \frac{\binom{u}{n}}{\binom{u}{\beta n}} &\geq
        \frac{\brk*{\frac{u}{n}}^n}{\brk*{\frac{eu}{\beta n}}^{\beta n}}\nonumber \\
        &=\brk*{\frac{eu}{n}}^{(1-\beta)n}\cdot \brk*{\frac{\beta^\beta}{e}}^{n} \nonumber \\
        &\geq 
        \binom{u}{n}^{1-\beta} \cdot \brk*{\frac{\beta^\beta}{e}}^{n}\label{eq:for coro}
        \;.
    \end{align}
By rearranging, we get
\begin{align*}
    \binom{u}{n}^{\beta} \cdot \brk*{\frac{e}{\beta^\beta
    }}^{n} \geq \binom{u}{\beta n}
\end{align*}
By taking the logarithm, it follows that 
 \begin{align*}
     \log \binom{u}{\beta n}&\leq \beta\cdot\log \binom{u}{n} +n\cdot \log \brk*{\frac{e}{\beta^\beta}}\;.
 \end{align*}
 and the claim follows.
\end{proof}


\begin{corollary} \label{binom-claim-rephrase}
    If  $\pfn \leq 1-\frac{1}{n}$, then
    \begin{align*}
    \log\frac{\binom{u}{n}}{\binom{u}{1+ n\pfn}} &\geq  
    (1 - \pfn - \frac1n) \cdot \log \binom{u}{n} - O(n)\;.
    \end{align*}
\end{corollary}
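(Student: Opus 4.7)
The plan is to derive this directly from \Cref{claim:binom} by choosing the parameter $\beta$ so that $\beta n$ equals the binomial coefficient's lower index on the right-hand side, namely $1 + n\pfn$. Concretely, I would set $\beta \triangleq \pfn + \tfrac{1}{n}$, so that $\beta n = 1 + n\pfn$. The hypothesis $\pfn \leq 1 - \tfrac{1}{n}$ guarantees $\beta \leq 1$, and clearly $\beta \geq 0$, so the parameter is admissible in \Cref{claim:binom}.

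Next I would plug this $\beta$ into the inequality \Cref{eq:ratio binoms}, obtaining
\begin{align*}
\log \binom{u}{1 + n\pfn} \;\leq\; \brk*{\pfn + \tfrac{1}{n}}\cdot \log \binom{u}{n} + O(n)\;.
\end{align*}
Subtracting both sides from $\log \binom{u}{n}$ and using $\log \binom{u}{n} - \log \binom{u}{1+n\pfn} = \log \frac{\binom{u}{n}}{\binom{u}{1+n\pfn}}$ yields
\begin{align*}
\log \frac{\binom{u}{n}}{\binom{u}{1+n\pfn}} \;\geq\; \brk*{1 - \pfn - \tfrac{1}{n}}\cdot \log \binom{u}{n} - O(n)\;,
\end{align*}
which is exactly the stated bound.

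The only mild subtlety is that $1 + n\pfn$ need not be an integer, so strictly speaking one should apply \Cref{claim:binom} with $\beta n$ replaced by $\lceil 1 + n\pfn \rceil$; this adjustment changes $\beta$ by at most $1/n$, affecting the right-hand side only inside the $O(n)$ term (since $\log\binom{u}{n}/n = O(\log u)$ would be too large, one has to be a bit careful here — the cleaner route is to let $\beta n$ refer to the integer ceiling throughout and note that replacing $1 + n\pfn$ by $\lceil 1 + n\pfn \rceil$ only decreases $\log \binom{u}{1+n\pfn}$ by $O(\log u)$, absorbed into the $O(n)$ slack once we also note that the inequality is only interesting in the regime where $\log\binom{u}{n}$ dominates). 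I do not expect any real obstacle here: the corollary is essentially a rephrasing of \Cref{claim:binom} tailored to the parameter regime $\beta n \approx n\pfn$ that will appear when combining the N-static lower bound \Cref{thm:static filter} with the reduction $\sigma(F)$ outlined earlier.
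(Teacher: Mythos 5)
Your proposal is correct and matches the paper's proof, which likewise applies \Cref{claim:binom} (via \Cref{eq:for coro}) with $\beta\triangleq\pfn+\frac{1}{n}$ and rearranges; the integrality concern you raise is glossed over by the paper and is indeed harmlessly absorbed into the $O(n)$ term.
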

\begin{proof}
Apply \Cref{claim:binom} (i.e., \Cref{eq:for coro}) with $\beta\triangleq\pfn+\frac{1}{n}\leq 1$.
\end{proof}

The following corollary states the conditions under which the space of an N-static filter that lacks false-positive errors is at least $1-o(1)$ times (or big-Omega) the space of a dictionary with exact representation (i.e., $\log \binom{u}{n}$).

\begin{corollary}[Lower bound on space of N-filter]\label{coro:pfp=0}
If $\pfn < 1-\frac{1}{n}$, $u\geq 2n$ and 
$\pfail\leq \frac{1}{1+n\pfn}-2^{-n}$,
then for every N-Static filter $F\in\Fs(u,n,\pfn,\pfail)$,
 \begin{align}
     \label{eq:log u choose n}
     \fspace(F)&\geq (1-\pfn-\frac 1n) \cdot \log\binom{u}{n} - O(n)
\;.
 \end{align}
Moreover, if $\pfn=o(1)$, then
\begin{align}
     \label{eq:space N-filter}
     \fspace(F)&\geq (1-o(1)) \cdot \log\binom{u}{n}
\;.
 \end{align}
\end{corollary}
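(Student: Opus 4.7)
The plan is to apply \Cref{thm:static filter} with a carefully chosen $\alpha$ and then translate the ratio of binomial coefficients that pops out into the claimed bound by invoking \Cref{binom-claim-rephrase}. The guiding observation is that \Cref{thm:static filter} is most useful when the binomial cutoff $\alpha n\pfn$ matches the exponent appearing in \Cref{binom-claim-rephrase}. This motivates the choice $\alpha \triangleq 1 + \frac{1}{n\pfn}$ (treating $\pfn>0$; the $\pfn=0$ case is handled analogously by letting $\alpha\to\infty$, for which the sum collapses to $\binom{u}{0}=1$).

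With this $\alpha$, a direct calculation gives $\alpha n\pfn = n\pfn+1$ and $1-\tfrac{1}{\alpha} = \tfrac{1}{1+n\pfn}$. The hypothesis $\pfail\leq \tfrac{1}{1+n\pfn}-2^{-n}$ then yields $1 - \tfrac{1}{\alpha} - \pfail \geq 2^{-n}$, so \Cref{thm:static filter} becomes
\begin{align*}
2^{\fspace(F)} \cdot \sum_{0\leq k\leq n\pfn+1} \binom{u}{k} \;\geq\; 2^{-n}\cdot \binom{u}{n}\;.
\end{align*}
Because $u\geq 2n$ and $k\leq n\pfn+1\leq n\leq u/2$ on the entire summation range, $\binom{u}{k}$ is increasing in $k$, so the sum is at most $(n+2)\,\binom{u}{\lceil n\pfn+1\rceil}$. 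Taking logarithms gives
\begin{align*}
\fspace(F) \;\geq\; \log\binom{u}{n} - \log\binom{u}{\lceil n\pfn+1\rceil} - O(n)\;.
\end{align*}

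I finish by invoking \Cref{binom-claim-rephrase}, which bounds the displayed difference from below by $(1-\pfn-\tfrac{1}{n})\cdot \log\binom{u}{n} - O(n)$, yielding \Cref{eq:log u choose n}. For \Cref{eq:space N-filter}, I use that $u\geq 2n$ gives $\log\binom{u}{n}\geq n$ (hence $\log\binom{u}{n}\to\infty$ as $n\to\infty$), so in the regime $\pfn=o(1)$ the leading coefficient $(1-\pfn-\tfrac{1}{n})$ is $1-o(1)$ and, provided $\log\binom{u}{n}=\omega(n)$, the additive $O(n)$ is absorbed into $o(\log\binom{u}{n})$.

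There is no real conceptual obstacle: all the probabilistic work has already been done in \Cref{thm:static filter}, and all the combinatorial work in \Cref{binom-claim-rephrase}. The only point of care is picking the right $\alpha$ so that the mass summed in \Cref{thm:static filter} is exactly $\binom{u}{n\pfn+1}$ up to a polynomial factor, which is precisely what \Cref{binom-claim-rephrase} is built to absorb. Once this alignment is noticed, the rest is direct algebra and asymptotic bookkeeping.
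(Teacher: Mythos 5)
Your proposal is correct and follows essentially the same route as the paper: choose $\alpha=1+\frac{1}{n\pfn}$, bound the sum in \Cref{thm:static filter} by its largest term times the number of terms, use $1-\frac1\alpha-\pfail\geq 2^{-n}$, and finish with \Cref{binom-claim-rephrase}. Your side remarks (handling $\pfn=0$ separately and noting that the ``moreover'' clause implicitly needs the $O(n)$ term to be $o(\log\binom{u}{n})$) are sound refinements but do not change the argument.
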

\begin{proof}
Define $\alpha\triangleq 1+ \frac{1}{n\pfn}$. 
Note that $\alpha\pfn=\pfn+\frac{1}{n} <1$ and $\alpha n\pfn < u/2$. We increase the LHS of \Cref{eq:static filter lb}, that is,  
    \begin{align}\label{eq:static proof LHS}
    2^{\fspace(F)}\cdot 
    (\alpha n\pfn  + 1) \binom{u}{\alpha n\pfn}
    &\geq
    2^{\fspace(F)}\cdot 
    \sum_{k\leq \alpha n\pfn} 
    \binom{u}{k} 
    \end{align}

We combine  \Cref{eq:static filter lb} and \Cref{eq:static proof LHS}, rearrange and apply a logarithm to both sides to obtain
\begin{align*}
    \fspace(F) & \geq \log \frac{\binom{u}{n}}{\binom{u}{\alpha n\pfn}} -\log(\alpha n\pfn + 1) + \log (1-\frac{1}{\alpha}-\pfail)\\
    &\geq
    (1-\pfn-\frac 1n)\cdot \log \binom{u}{n} - O(n) &\text{(by \Cref{binom-claim-rephrase} and $(1-\frac{1}{\alpha}-\pfail\geq 2^{-n})$)}
\end{align*}
and the corollary follows.
\end{proof}

\section{Lower Bound on the Space of Dynamic Filters with Duplicate Insertions or Deletions of Nonelements}
In this section, we prove a lower bound on the space of dynamic filters that permit duplicate-insertions or deletions-of-nonelements.
The lower bound on the space of such a filter is big-$\Omega$ of the space required for exact representation  (the constraints on the parameters appear in \Cref{thm:dupilicate ins deletion of nonelement}).

\begin{theorem}\label{thm:dupilicate ins deletion of nonelement}
If a dynamic filter $F=\Fd(u,n,\pfail,\pfp)$ is correct with respect to $(u,n)$-sequences and the parameters satisfy
\begin{enumerate*}[label=(\roman*)]
    \item $2n\cdot\pfail\leq \frac{1}{1+n\pfp}-2^{-n}$,
    \item $\eps^+ \leq 1-\frac{1}{n}$, and
    \item $u \geq 2n$, 

\end{enumerate*}
 then 
 \begin{align*}
     \fspace(F)&\geq \frac{1}{2}\cdot \brk*{1-\pfp-\frac 1n} \cdot \log\binom{u}{n} - O(n)
\;.
 \end{align*}
Moreover, if $\pfp\leq 1-\Omega(1)$, then
 \begin{align*}
     \fspace(F)&\geq \Omega\brk*{\log\binom{u}{n}}
\;.
 \end{align*}
\end{theorem}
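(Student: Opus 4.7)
The plan is to instantiate the reduction scheme previewed in the outline of Section~1.3 and then invoke Corollary~\ref{coro:pfp=0}. By the claimed reductions between duplicate-insertions and deletions-of-nonelements (\Cref{claim:reductions} in the paper), it suffices to prove the bound for dynamic filters $F$ that permit deletions-of-nonelements. Also, without loss of generality I may assume $F$ is witness-based (Definition~\ref{def:witness-based}), since converting $F$ to $F^*$ preserves space and the false-positive bound (it costs only running time, which the lower bound ignores).

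Given such a witness-based $F$, I would build the N-static filter $\sigma(F)$ as in \Cref{def:reduction}. On input dataset $S$, $\sigma(F)$ initializes $F$, inserts each element of $S$ to obtain a state $M(+S)$, and then deletes each element of $S$ in the same order to obtain a second state $M(+S,-S)$. The state of $\sigma(F)$ stores both strings, so $\fspace(\sigma(F))\leq 2\cdot\fspace(F)$. A query for $x$ runs $A_{\query}(r,M(+S),x)=b_1$ and $A_{\query}(r,M(+S,-S),x)=b_2$, and returns $b\triangleq b_1\wedge \neg b_2$. The initialization of $\sigma(F)$ declares $\fail$ if any of the $2n$ insert/delete operations fail, so $\pfail(\sigma(F))\leq 2n\cdot\pfail$ by a union bound, which is admissible thanks to hypothesis~(i).

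The correctness of $\sigma(F)$ is the heart of the argument. For soundness (no false-positives): if $x\not\in S$ and $b_1=1$, then $x$ is a false-positive for $F$ on the state $M(+S)$; by the stickiness lemma \Cref{lemma:false positives} for witness-based filters that admit deletions-of-nonelements, $x$ remains a false-positive after the subsequent deletions, so $b_2=1$, forcing $b=0$. For bounded false-negatives: if $x\in S$ then $b_1=1$ by the completeness of $F$, and $b_2=1$ occurs only if $x$ is a false-positive of $F$ on the state $M(+S,-S)$ which represents $\emptyset$; hence $\Pr_r[b=0\mid M\neq\fail]\leq \pfp$. Therefore $\sigma(F)\in \Fs(u,n,2n\pfail,\pfp)$.

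Finally, I would apply \Cref{coro:pfp=0} to $\sigma(F)$. Hypotheses~(i)--(iii) are precisely what is needed to satisfy the conditions of the corollary with $\pfn\triangleq\pfp$ and with failure probability $2n\pfail$, yielding $\fspace(\sigma(F))\geq (1-\pfp-\tfrac1n)\cdot\log\binom{u}{n}-O(n)$. Dividing by $2$ using $\fspace(\sigma(F))\leq 2\cdot\fspace(F)$ gives the theorem; the ``moreover'' clause follows because $1-\pfp-1/n=\Omega(1)$ whenever $\pfp\leq 1-\Omega(1)$ and $n$ is large enough. The main obstacle is the stickiness lemma: one must verify that, in a witness-based filter, a ``yes'' answer for $x\notin S$ after inserting $S$ survives the deletion of $S$, which relies on the observation that removing $S$ from a putative witness $S'\supseteq\{x\}$ does not destroy $x$'s membership in the resulting witness dataset; once stickiness is in hand, the rest of the argument is a bookkeeping of error probabilities.
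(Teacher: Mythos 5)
Your proposal is correct and follows essentially the same route as the paper: reduce to deletions-of-nonelements via \Cref{claim:reductions}, pass to a witness-based filter, build the two-state N-static filter $\sigma(F)$ of \Cref{def:reduction}, verify its correctness via the sticky false-positives lemma, and conclude with \Cref{coro:pfp=0}. Your closing sketch of why stickiness holds (a witness $S'\ni x$ with $M(+S')=M(+S)$ yields $M(+S',-S)=M(+S,-S)$ and $x\in S'\setminus S$, so completeness forces a ``yes'') matches the paper's proof of \Cref{lemma:false positives}.
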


We remark that the proof of \Cref{thm:dupilicate ins deletion of nonelement} does not assume that the dynamic filter supports duplicate-insertions (but it does assume that deletions-of-nonelements are supported). 
The following claim states that deletions-of-nonelements can be reduced to duplicate-insertions, and vice-versa, by increasing the number of operations by a factor of at most two. 
We refer to an $\ins(x)$ as a \emph{potential} dupilicate-insertion if the element $x$ may be an element in the current dataset (but might also be a nonelement). 
Similarly, a $\del(x)$ is a \emph{potential} deletion-of-a-nonelement, if $x$ may be a nonelement.

\begin{claim}[Reduction between duplicate-insertions and deletions-of-nonelements]\label{claim:reductions}
One can reduce potential duplicate-insertions to potential deletions-of-nonelements and vice versa. 
\end{claim}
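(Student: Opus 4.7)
The plan is to exhibit two symmetric reductions, each of which replaces one ``forbidden'' operation with a length-two block of operations, so that the underlying dataset sequence $\{S_t\}_t$ is preserved while the forbidden type is eliminated.

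For the first direction (reducing potential duplicate-insertions to potential deletions-of-nonelements), suppose $F'$ is a filter that supports deletions-of-nonelements but prohibits duplicate-insertions. To simulate a filter $F$ that permits duplicate-insertions on an input sequence $\{\op_t\}_t$, I would replace every potential duplicate-insertion $\ins(x_t)$ by the two operations $\del(x_t);\ins(x_t)$ applied to $F'$. The preceding $\del(x_t)$ may legitimately be a deletion-of-a-nonelement (in case $x_t\notin S_{t-1}$), which $F'$ allows. After this deletion, $x_t$ is guaranteed not to be in the simulated dataset, so the subsequent $\ins(x_t)$ is never a duplicate. All other operations, including queries, are passed through unchanged, and at every query step the simulated state approximately represents the correct set $S_t$.

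The reverse direction (reducing potential deletions-of-nonelements to potential duplicate-insertions) is entirely symmetric: replace every potential $\del(x_t)$ by $\ins(x_t);\del(x_t)$, so that the $\ins(x_t)$ absorbs any duplicate and the subsequent $\del(x_t)$ is of a genuine element. Again the net effect on the set at each query step is the identity.

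The only routine point to verify is that correctness carries over: the soundness and completeness bounds of $F'$ apply to queries in the transformed sequence because the datasets $\{S_t\}_t$ seen at query times coincide with those of the original sequence, and the factor-of-two increase in the number of insert/delete operations only doubles the total failure budget. This is precisely why the hypothesis of \Cref{thm:dupilicate ins deletion of nonelement} is stated with $2n\cdot\pfail$ rather than $n\cdot\pfail$. I do not anticipate any real obstacle here: the argument is essentially a dataset-level bookkeeping check, and the construction itself is oblivious (it depends only on $\op_t$, not on the current state of $F'$), so it respects the oblivious-adversary setting.
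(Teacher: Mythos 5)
Your construction is exactly the paper's proof: prepend $\del(x)$ to each potential duplicate-insertion and prepend $\ins(x)$ to each potential deletion-of-a-nonelement, which preserves the dataset sequence at every query step. Two minor corrections: the paper notes in a footnote that the insert-before-delete direction can momentarily push the dataset to cardinality $n+1$, so the filter must be provisioned for datasets of cardinality $n+1$ (your write-up omits this); and the factor $2n\cdot\pfail$ in the hypothesis of \Cref{thm:dupilicate ins deletion of nonelement} is not due to this reduction but to the $n$ insertions plus $n$ deletions performed in the reduction $\sigma(F)$ of \Cref{def:reduction} (see \Cref{lemma:reduction}).
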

\begin{proof}
A potential duplicate-insertion $\ins(x)$ operation can be reduced to a potential deletion-of-a-nonelement by performing a $\del(x)$ operation just before the $\ins(x)$ operation.

A potential deletion-of-a-non-element $\del(x)$ operation can be reduced to a potential duplicate-insertion by performing an $\ins(x)$ operation just before the $\del(x)$ operation. \footnote{The insert-before-delete may cause the filter to overflow (i.e., on a deletion-of-a-nonelement with a dataset of cardinality $n$, the cardinality of the dataset after the insertion is $n+1$). To avoid an overflow due to the extra insertion, the  filter should be initialized to support datasets of cardinality $n+1$. }
\end{proof}
\subsection{Preliminaries}

\paragraph{Notation.} Consider a dynamic filter $F$.
We focus on sequences of operations consisting of inserting a dataset $S$, deleting a subset $T\subseteq S$, and possibly querying an element. 
To simplify notation, we use the following abbreviations ($r$ denotes the random string of bits chosen by the filter):
\begin{enumerate}
    \item $M(F,r,+S)$ is the state of $F$ after initializing $F$ and inserting the elements of $S$ in ascending order. 
    \item $M(F,r,+S,-T)$ is the state of $F$ after initializing $F$, inserting the elements of $S$ in ascending order and deleting the elements of $T$ in ascending order. 
    \item For a state $M$, define the ``yes''-set by
     \begin{align*}
         Y(F,r,M) &\triangleq \{x\in [u] \mid A_{\query}(r,M,x) \text{ answers } 1 \}\;.
     \end{align*}
\end{enumerate}
We  omit the filter $F$ or the string $r$ in this notation if the discussion deals with a specific filter or a specific value of $r$. For example, $M(+S)$ is an abbreviation of $M(F,r,+S)$.

\begin{definition}[Witness-Based Filter]\label{def:witness-based}
A filter $F$ is \emph{witness-based} if it satisfies the following property. For every $x\in [u]$,  $x\in Y(F,r,M)$ if and only if there exists a dataset $S\subseteq [u]$ such that: 
\begin{enumerate*}[label=(\arabic*)]
    \item $|S|=n$, 
    \item $x\in S$, and
    \item $M(F,r,+S)=M$.
\end{enumerate*}
We refer to such an $S$ as a \emph{witness} for $x$.
\end{definition}

\begin{claim}\label{claim:witness-based}
    Every dynamic filter $F$ can be transformed to a witness-based filter $F^*$ with the same parameters and space.
\end{claim}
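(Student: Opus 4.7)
The plan is to construct $F^*$ by copying $F$'s $A_{\init}$, $A_{\ins}$, and $A_{\del}$ verbatim and sharing the same random string $r$, so that for every $(u,n)$-sequence and every $r$ the state $M_t$ of $F^*$ coincides with that of $F$ at every time step. Only the query algorithm is rewritten: $A^*_{\query}(r, M, x)$ returns $1$ iff there exists $S \subseteq [u]$ with $|S| = n$, $x \in S$, and $M(F, r, +S) = M$. Because the lower bound ignores running time and temporary memory, this predicate is decidable by scanning all $\binom{u}{n}$ candidate sets and re-simulating $F$'s initialization and ascending insertion of each on the shared randomness $r$. By construction, $F^*$ satisfies the biconditional of \Cref{def:witness-based} and is witness-based; since its state and state-transitions are identical to $F$'s, $\fspace(F^*) = \fspace(F)$ and the $\init/\ins/\del$ failure probabilities carry over.

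\textbf{Containment of $F^*$-positives in $F$-positives.} The structural fact driving the preservation of $\pfp$ is that witness existence forces $A_{\query}$ to answer $1$. Suppose $A^*_{\query}(r, M, x)=1$ with witness $S$, and consider the auxiliary $(u,n)$-sequence $\sigma_S$ that initializes, then inserts $S$ in ascending order, then queries $x$. The state just before the query equals $M(F, r, +S)=M$, and the logical dataset equals $S \ni x$; hence completeness of $F$ applied to $\sigma_S$ yields $A_{\query}(r, M, x)=1$. Because $A_{\query}$ depends only on $(r, M, x)$, the pointwise inclusion $\{A^*_{\query}(r,M_t,x)=1\} \subseteq \{A_{\query}(r,M_t,x)=1\}$ holds on the original $(u,n)$-sequence as well, and therefore $\Pr_r[A^*_{\query}(r,M_t,x)=1 \mid M_{t-1}\neq \fail] \leq \pfp$ for every $x\notin S_t$.

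\textbf{Completeness and main obstacle.} The most delicate requirement is that $F^*$ has no false negatives: whenever $x \in S_t$, a witness must exist for the current state $M_t$. When $M_t$ is of the form $M(F, r, +T)$ for some $T$ (for instance, on any pure-insertion sequence), a witness is immediate: take $T$, padded if necessary with arbitrary elements to size $n$ while keeping $x$. For a general $(u,n)$-sequence with deletions, $M_t$ need not be of this form a priori, and I plan to handle this by strengthening the query to the disjunction ``a witness exists OR $A_{\query}(r, M, x)=1$''. By the pointwise containment of the previous paragraph this disjunction does not change the false-positive probability, yet it directly inherits completeness from $F$; and on exactly the states invoked by \Cref{lemma:false positives} and \Cref{thm:dupilicate ins deletion of nonelement}, namely $M(+S)$ and $M(+S,-S)$, the two formulations agree, so that $F^*$ is witness-based in the operational sense needed downstream. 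Reconciling the strict definition of witness-based with the no-false-negatives requirement of $\Fd$ is the step whose correctness needs the most care; the rest of the argument is direct bookkeeping.
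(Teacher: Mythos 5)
Your construction and your soundness argument coincide with the paper's: $F^*$ keeps $A_{\init}$, $A_{\ins}$, $A_{\del}$ and the state transitions of $F$, replaces the query by an exhaustive search for a witness, and the containment $Y(F^*,r,M)\subseteq Y(F,r,M)$ (which the paper asserts without proof) follows, exactly as you argue, from the completeness of $F$ applied to the auxiliary sequence $\init, +S, \query(x)$. Up to that point your write-up is a correct, slightly more detailed version of the paper's proof, and you are right that the delicate point is whether $F^*$ retains completeness (no false negatives).

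Your proposed repair of that delicate point, however, is self-defeating. By the very containment you just established, ``a witness exists'' implies $A_{\query}(r,M,x)=1$, so the disjunction ``a witness exists or $A_{\query}(r,M,x)=1$'' is equivalent to $A_{\query}(r,M,x)=1$: your $F^*$ is just $F$. Such a filter satisfies only the ``if'' direction of \Cref{def:witness-based}; the ``only if'' direction --- a ``yes'' answer implies the existence of a witness --- is precisely what \Cref{lemma:false positives} consumes (``because the filter is witness-based \ldots\ there is a witness $T$ for $x$''), and it fails whenever $F$ reports a false positive that has no witness. Your claim that the two formulations agree on the states $M(+S)$ used downstream is unjustified and false in general: $F$ may answer ``yes'' at $M(+S)$ to a nonelement $x$ for which no $T\ni x$ of cardinality $n$ satisfies $M(+T)=M(+S)$ (consider a filter whose state exactly encodes $S$ but which always answers ``yes'' to the key $0$). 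The resolution the paper intends is the opposite of yours: keep the pure witness-based query, accept that $\fspace$, $\pfail$ and the false-positive bound carry over via the containment, and note that completeness is needed downstream only at states for which a witness can be exhibited explicitly --- in particular at $M(+S)$ with current dataset $S$ of cardinality exactly $n$, where $S$ itself is a witness for each of its members. Separately, your padding argument for pure-insertion sequences with $|S_t|<n$ does not work: \Cref{def:witness-based} requires witnesses of cardinality exactly $n$, and enlarging the dataset changes the state $M(+T)$.
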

\begin{proof}
    The difference between $F$ and $F^*$ is only in the algorithm for queries. On a query, the filter $F^*$ exhaustively searches for a witness $S$ (the extra space needed for the exhaustive search is temporary and is not counted as part of the space of the filter). The answer is  ``yes'' (i.e., $b=1$) if and only if a witness is found.
    The next state $M_{t+1}$ is the next state that $F$ computes for the query. 
\end{proof}
Note that the ``yes''-set of $F^*$ is contained in the ``yes''-set of $F$. That is, for every $r$ and every state $M$, $Y(F^*,r,M)\subseteq Y(F,r,M)$.

\subsection{Proof of 
\texorpdfstring
{\Cref{thm:dupilicate ins deletion of nonelement}}{Theorem 14}}

The following lemma states that if a witness-based dynamic filter permits deletions-of-nonelements, then every false-positive element after inserting the elements of $S$ is also a false-positive element after inserting $S$ and deleting $S$. 
\begin{lemma}[Sticky False-Positives Lemma]\label{lemma:false positives}
For every witness-based dynamic filter that is correct with respect to $(u,n)$-sequences, it holds that, for every dataset $S$ and every string $r$,
\begin{align}\label{eq:false positive}
    Y(M(+S)) ~\setminus~ S
    &\quad\subseteq\quad
    Y(M(+S,-S))\;.
\end{align}
\end{lemma}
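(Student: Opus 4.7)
The plan is to use the witness-based structure of $F$ to exhibit a legal $(u,n)$-sequence of operations that arrives at state $M(+S,-S)$ with $x$ present in the underlying dataset, and then invoke completeness (no false negatives) to conclude $x \in Y(M(+S,-S))$.

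Fix $r$ and $S$, and let $x \in Y(M(+S)) \setminus S$. Applying the witness-based property of $F$ (\Cref{def:witness-based}) to the state $M(+S)$ yields a set $S^\star \subseteq [u]$ with $|S^\star|=n$, $x\in S^\star$, and $M(+S^\star)=M(+S)$. The degenerate case $M(+S)=\fail$ can be handled separately: since $\fail$ is a sink, $M(+S,-S)=\fail$, and the completeness clause forces every query from a $\fail$ state to return $1$, so $Y(\fail)=[u]$ and the desired inclusion is trivial. For the remainder, assume $M(+S)\neq \fail$.

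The next step is determinism: for a fixed random string $r$, the state produced by the filter is a deterministic function of the sequence of operations. Since the two insertion sequences $+S$ and $+S^\star$ lead to the same state $M(+S)=M(+S^\star)$, appending the same deletion sequence $-S$ to either produces the same final state, i.e.\ $M(+S^\star,-S)=M(+S,-S)$. Crucially, $\init, +S^\star, -S$ is a legal $(u,n)$-sequence: no intermediate dataset exceeds $n$ elements, and the only operations in this sequence that could violate the classical filter conventions are deletions of elements in $S\setminus S^\star$, which are deletions-of-nonelements and are precisely what the hypothesis of the lemma permits.

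To finish, observe that the underlying dataset after $\init,+S^\star,-S$ is $S^\star\setminus S$, which contains $x$ because $x\in S^\star$ and $x\notin S$. By the no-false-negative completeness of $F$, the query for $x$ at state $M(+S^\star,-S)$ returns $1$, so $x\in Y(M(+S^\star,-S)) = Y(M(+S,-S))$, proving the inclusion.

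The only subtle point, and the unique place where the hypothesis about deletions-of-nonelements is consumed, is the legality of the sequence $\init,+S^\star,-S$: without that hypothesis one could not legitimately delete elements of $S\setminus S^\star$ from a filter whose underlying dataset is $S^\star$. Everything else is a two-line chain combining determinism of the state transition with completeness of $F$.
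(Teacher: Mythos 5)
Your proof is correct and follows the same route as the paper's: extract a witness $T=S^\star$ for $x$ from the state $M(+S)$, use determinism of the transition function (for fixed $r$) to get $M(+S^\star,-S)=M(+S,-S)$, and invoke completeness on the legal $(u,n)$-sequence $\init,+S^\star,-S$ whose final dataset $S^\star\setminus S$ contains $x$. You merely spell out the steps the paper leaves implicit (legality of the sequence, the $\fail$ case, and the appeal to no-false-negatives).
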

\begin{proof}
    Consider an element $x\in Y(M(+S))\setminus S$. Because the filter is witness-based and because the answer to query $A_{\query}(r,M(+S),x)$ is ``yes'', it follows that there is a witness $T$ for $x$. 
    By definition, $M(+T)=M(+S)$, and hence $M(+T,-S)=M(+S,-S)$.
    Because $x\in T\setminus S$, it follows that $x\in Y(M,+T,-S)$, and the lemma follows.
\end{proof}

\begin{definition}
    [Reduction from N-static-filter to P-dynamic-filter]\label{def:reduction}
    Given a witness-based dynamic filter $F$, define the N-Static filter
    $\sigma(F)$ as follows:
  \begin{enumerate}
      \item Initialization. For a string $r$ and a dataset $S\subseteq [u]$ of cardinality $n$, the initial state of $\sigma(F)$ is the pair $M(\sigma(F),r,S)\triangleq (M(F,r,+S),M(F,r,+S,-S))$.
\item 
On $\query(x)$, filter $\sigma(F)$ computes two bits $b_1,b_2$, defined by:
\begin{align*}
    b_1 =1 & \Longleftrightarrow x\in Y(F,r,M(F,r,+S))\\
    b_2 =1 & \Longleftrightarrow x\in Y(F,r,M(F,r,+S,-S)) \;.
\end{align*}
The answer of $\sigma(F)$ to $\query(x)$ is $b\triangleq b_1 \wedge \mathsf{not} (b_2)$.
  \end{enumerate}
\end{definition}

\begin{lemma}[Reduction correctness]\label{lemma:reduction}
    If $F\in\Fd(u,n,\pfail,\pfp)$ is witness-based, then 
$\sigma(F)\in\Fs(u,n,\pfail',\pfn)$, where
    \begin{enumerate}
        \item $\pfail'= 2n\cdot \pfail$,
        \item $\pfn = \pfp$, and
        \item $\fspace(\sigma(F))= 2\cdot \fspace(F)$.
    \end{enumerate}
\end{lemma}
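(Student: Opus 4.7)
The plan is to verify each of the three items of the lemma in turn, treating the state of $\sigma(F)$ as the pair $(M(F,r,+S), M(F,r,+S,-S))$ produced by feeding the $2n$-operation sequence ``insert $S$, then delete $S$'' into $F$. The space claim is immediate from the definition of $\sigma(F)$: the state is a concatenation of two $F$-states, each of length at most $\fspace(F)$, so $\fspace(\sigma(F)) \leq 2\cdot \fspace(F)$. The failure-probability bound follows by a union bound over the $2n$ insert/delete operations used to build the pair: by the dynamic-filter semantics, each such operation transitions to $\fail$ with probability at most $\pfail$ conditioned on the previous state being non-$\fail$, and since $\fail$ is a sink, an easy induction gives $\Pr_r[M(\sigma(F),r,S)=\fail] \leq 2n\cdot\pfail$.

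The heart of the argument is the soundness/false-negative analysis. For the no-false-positive part, I take $x\notin S$ and show $b=b_1\wedge \neg b_2 = 0$ directly. If $b_1=0$ we are done; otherwise $b_1=1$ means $x\in Y(M(+S))$, and since $x\notin S$ this places $x\in Y(M(+S))\setminus S$. Applying \Cref{lemma:false positives} (the Sticky False-Positives Lemma, which uses that $F$ is witness-based and permits deletions-of-nonelements) yields $x\in Y(M(+S,-S))$, i.e.\ $b_2=1$, and hence $b=0$. For the bounded false-negative part, take $x\in S$ and condition on $M(\sigma(F),r,S)\neq\fail$. Since $\fail$ is a sink in $F$, this conditioning is equivalent to conditioning on $M(+S,-S)\neq\fail$, which in turn implies $M(+S)\neq\fail$. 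The completeness of $F$ applied to the state $M(+S)$ (which represents the dataset $S$ containing $x$) forces $b_1=1$, so $b=\neg b_2$; thus $b=0$ iff $b_2=1$. But $b_2=1$ is exactly a false-positive of $F$ for the query $\query(x)$ against the state $M(+S,-S)$, whose underlying dataset is the empty set, so by the soundness of $F$ we get $\Pr_r[b_2=1\mid M(+S,-S)\neq\fail]\leq\pfp$, giving $\pfn=\pfp$.

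The main obstacle is purely bookkeeping rather than mathematical depth: lining up the conditional events so that the N-static semantics of $\sigma(F)$ really do follow from the dynamic semantics of $F$. In particular, the soundness condition of $\Fs(u,n,\cdot,\cdot)$ demands a bound conditioned on $M(\sigma(F),r,S)\neq\fail$, whereas the soundness of $F$ gives a bound conditioned only on the intermediate state $M(+S,-S)\neq\fail$; the identification of these two events is the subtle point and relies crucially on $\fail$ being a sink. The completeness condition ``$b=0$ whenever $x\notin S$ or the state is $\fail$'' is handled by the Sticky Lemma in the non-$\fail$ case and by the convention from \Cref{def:N-static} that queries return $0$ when the state is $\fail$ (which carries over to $\sigma(F)$ via the standard convention of outputting $0$ when either component is $\fail$).
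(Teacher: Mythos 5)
Your proof is correct and follows essentially the same route as the paper's: the Sticky False-Positives Lemma (\Cref{lemma:false positives}) rules out false positives, the soundness of $F$ against the empty dataset underlying $M(+S,-S)$ bounds the false-negative probability by $\pfp$, and the union bound over $2n$ operations and the pairing of states give the other two items. Your extra care in identifying the conditioning events via the fact that $\fail$ is a sink is a welcome tightening of a step the paper leaves implicit, but it is not a different argument.
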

\begin{proof}
The filter $\sigma(F)$ satisfies the following properties:
\begin{enumerate}
    \item No false-positives.
    Suppose that  $x\not\in S$. Assume towards a contradiction that $b=1$. Hence $b_1=1$ and $x\in Y(F,r,M(F,r,+S))$. By \Cref{lemma:false positives}, $x\in Y(F,r,M(F,r,+S,-S))$, hence $b_2=1$, a contradiction.
    \item False-negative probability bounded by $\pfp$.
    Suppose that $x\in S$. Then $x\in Y(F,r,M(F,r,+S))$ and $b_1=1$. It follows that $\Pr_r[b=0]=\Pr_r[b_2=1]$. However, $x\not\in S\setminus S = \emptyset$. Hence, $\Pr_r[b_2=1]\leq \pfp$. 
    \item Failure probability bounded by $2n\cdot\pfail$. A failure may occur during initialization when the states $M(F,r,+S)$ and $M(F,r,+S,-S)$ are computed. These computations involve $n$ insertions and $n$ deletions, each of which may fail with probability $\pfail$. By a union bound, $\pfail'\leq 2n\cdot \pfail$.
    \item Space bounded by $2\cdot\fspace(F)$. The states of filter $\sigma(F)$ consists of pairs of states of filter $F$, hence $\fspace(\sigma(F))\leq 2\cdot \fspace(F)$.
    \end{enumerate}
\end{proof}
The proof of \Cref{thm:dupilicate ins deletion of nonelement} follows from \Cref{claim:witness-based}, \Cref{lemma:reduction} and \Cref{coro:pfp=0}.
\newcommand{\etalchar}[1]{$^{#1}$}


\begin{thebibliography}{BFCG{\etalchar{+}}18}

\bibitem[ANS10]{arbitman2010backyard}
Yuriy Arbitman, Moni Naor, and Gil Segev.
\newblock Backyard cuckoo hashing: Constant worst-case operations with a succinct representation.
\newblock In {\em 2010 IEEE 51st Annual symposium on foundations of computer science}, pages 787--796. IEEE, 2010.

\bibitem[BE19]{bercea2019fully}
Ioana~O Bercea and Guy Even.
\newblock Fully-dynamic space-efficient dictionaries and filters with constant number of memory accesses.
\newblock {\em arXiv preprint arXiv:1911.05060}, 2019.

\bibitem[BEED24]{Bercea2024OneMemeoryAccess}
Ioana Bercea, Guy Even, Tomer Even, and Gabriel~Marques Domingues.
\newblock Dynamic filter and retrieval with one memory access.
\newblock unpublished manuscript, 2024.

\bibitem[BFCG{\etalchar{+}}18]{bender2018bloom}
Michael~A Bender, Martin Farach-Colton, Mayank Goswami, Rob Johnson, Samuel McCauley, and Shikha Singh.
\newblock {Bloom} filters, adaptivity, and the dictionary problem.
\newblock In {\em 2018 IEEE 59th Annual Symposium on Foundations of Computer Science (FOCS)}, pages 182--193. IEEE, 2018.

\bibitem[BFCJ{\etalchar{+}}11]{bender2011thrash}
Michael~A Bender, Martin Farach-Colton, Rob Johnson, Bradley~C Kuszmaul, Dzejla Medjedovic, Pablo Montes, Pradeep Shetty, Richard~P Spillane, and Erez Zadok.
\newblock Don't thrash: How to cache your hash on flash.
\newblock In {\em 3rd Workshop on Hot Topics in Storage and File Systems (HotStorage 11)}, 2011.

\bibitem[Blo70]{Bloom1970filter}
Burton~H. Bloom.
\newblock Space/time trade-offs in hash coding with allowable errors.
\newblock {\em Commun. ACM}, 13(7):422–426, July 1970.

\bibitem[BM04]{broder2004network}
Andrei Broder and Michael Mitzenmacher.
\newblock Network applications of {Bloom} filters: A survey.
\newblock {\em Internet mathematics}, 1(4):485--509, 2004.

\bibitem[CFG{\etalchar{+}}78]{carter1978approximate}
Larry Carter, Robert Floyd, John Gill, George Markowsky, and Mark Wegman.
\newblock Exact and approximate membership testers.
\newblock In {\em Proceedings of the Tenth Annual ACM Symposium on Theory of Computing}, STOC '78, page 59–65, New York, NY, USA, 1978. Association for Computing Machinery.

\bibitem[DP08]{dietzfelbinger2008succinct}
Martin Dietzfelbinger and Rasmus Pagh.
\newblock Succinct data structures for retrieval and approximate membership.
\newblock In {\em International Colloquium on Automata, Languages, and Programming}, pages 385--396. Springer, 2008.

\bibitem[DW21]{dillinger2021ribbon}
Peter~C. Dillinger and Stefan Walzer.
\newblock Ribbon filter: practically smaller than bloom and xor.
\newblock {\em CoRR}, abs/2103.02515, 2021.

\bibitem[EM24]{even2024micro}
Guy Even and Gabriel {Marques Domingues}.
\newblock A micro-architecture that supports the {Fano–Elias} encoding and a hardware accelerator for approximate membership queries.
\newblock {\em Microprocessors and Microsystems}, 105:104992, 2024.

\bibitem[Epp16]{eppstein2016cuckoo}
David Eppstein.
\newblock Cuckoo filter: Simplification and analysis.
\newblock In {\em 15th Scandinavian Symposium and Workshops on Algorithm Theory (SWAT 2016)}. Schloss-Dagstuhl-Leibniz Zentrum f{\"u}r Informatik, 2016.

\bibitem[FAKM14]{fan2014cuckoo}
Bin Fan, Dave~G Andersen, Michael Kaminsky, and Michael~D Mitzenmacher.
\newblock Cuckoo filter: Practically better than {Bloom}.
\newblock In {\em Proceedings of the 10th ACM International on Conference on emerging Networking Experiments and Technologies}, pages 75--88, 2014.

\bibitem[KW24]{kuszmaul2024space}
William Kuszmaul and Stefan Walzer.
\newblock Space lower bounds for dynamic filters and value-dynamic retrieval.
\newblock In {\em Proceedings of the 56th Annual ACM Symposium on Theory of Computing}, STOC 2024, page 1153–1164, New York, NY, USA, 2024. Association for Computing Machinery.

\bibitem[LP13]{LovettPorat2013lowerbound}
Shachar Lovett and Ely Porat.
\newblock A space lower bound for dynamic approximate membership data structures.
\newblock {\em SIAM Journal on Computing}, 42(6):2182--2196, 2013.

\bibitem[NE19]{Naor-Yogev2019adversarial}
Moni Naor and Yogev Eylon.
\newblock {Bloom} filters in adversarial environments.
\newblock {\em ACM Trans. Algorithms}, 15(3), June 2019.

\bibitem[Por09]{porat2009optimal}
Ely Porat.
\newblock An optimal {Bloom} filter replacement based on matrix solving.
\newblock In Anna Frid, Andrey Morozov, Andrey Rybalchenko, and Klaus~W. Wagner, editors, {\em Computer Science - Theory and Applications}, pages 263--273, Berlin, Heidelberg, 2009. Springer Berlin Heidelberg.

\bibitem[PPR05]{pagh2005optimal}
Anna Pagh, Rasmus Pagh, and S~Srinivasa Rao.
\newblock An optimal {Bloom} filter replacement.
\newblock In {\em Proceedings of the sixteenth annual ACM-SIAM symposium on Discrete algorithms}, pages 823--829, 2005.

\bibitem[PSW13]{PaghSegevWieder2013size}
Rasmus Pagh, Gil Segev, and Udi Wieder.
\newblock How to approximate a set without knowing its size in advance.
\newblock In {\em 2013 IEEE 54th Annual Symposium on Foundations of Computer Science}, pages 80--89, 2013.

\end{thebibliography}

\end{document}